\theoremstyle{plain}
\journal{Journal of \LaTeX\ Templates}
\newtheorem{fact}{Fact}
\newtheorem{definition}{Definition}
\newtheorem{theorem}{Theorem}
\tiny \color{red},
\begin{document}
\begin{frontmatter}
\title{Quantum Speedup and Limitations on Matroid Property Problems}

\author{Xiaowei Huang, Jingquan Luo, Lvzhou Li\footnote{L. Li is the corresponding author (lilvzh@mail.sysu.edu.cn)}}
\address{Institute of Quantum Computing and Computer Theory,  School of Computer and Engineering, Sun Yat-sen University, Guangzhou 510006, China}

\begin{abstract}
This paper initiates the study of  quantum algorithms for matroid property problems.
It is shown that  quadratic quantum speedup  is possible for the calculation
problem of finding the girth or the number of
circuits (bases, flats, hyperplanes) of a matroid, and for the  decision problem of
deciding whether a matroid is uniform or Eulerian, by 
giving a uniform lower bound $\Omega(\sqrt{\binom{n}{\lfloor n/2\rfloor}})$ on the 
query complexity for all these problems. On the other hand, for the uniform matroid 
decision problem,  an asymptotically optimal quantum algorithm is proposed which achieves
the lower bound, and  for the girth problem,  an almost optimal quantum algorithm is given
with query complexity $O(\log n\sqrt{\binom{n}{\lfloor n/2\rfloor}})$. In addition, for 
the paving matroid decision problem,  a lower bound 
$\Omega(\sqrt{\binom{n}{\lfloor n/2\rfloor}/n})$ on the query complexity is obtained, and  an
$O(\sqrt{\binom{n}{\lfloor n/2\rfloor}})$ quantum algorithm is presented.
\end{abstract}

\begin{keyword}
  Quantum Computing\sep Matroid\sep Quantum Algorithm\sep Quantum Query Complexity
  \MSC[2010] 00-01\sep  99-00
\end{keyword}
\end{frontmatter}


\section{Introduction}\label{Int}
\subsection{Background}

The concept of matroids was originally introduced by Whitney  \cite{JHUP/whitney35}
in 1935 as a generalization of the concepts of linear spaces and graphs. After eighty years of
 development, the theory of matroids  has become an important branch of mathematics. It has already
become an effective mathematical tool to study other mathematic branches and has many
applications in geometry, topology, network theory and coding theory, especially in
combinatorial optimization \cite{DBLP:journals/mp/Edmonds71,HRW/lawler76,
  DBLP:journals/comsur/BassoliMRST13,DBLP:conf/ismp/Iri82}. 
When people want to study the commonality of some  problems from a more abstract level, matroids become the focus of this kind of research, because many important problems can be regarded as instances of matroid problems. For example, finding the maximum matching of a bipartite graph is essentially the problem of finding the intersection of two matroids, the coloring problem is essentially a matroid partition problem, and finding the Hamiltonian circuit of a graph is essentially finding the intersection of three matroids. It is because of this high degree of generality that the matroid theory has received a lot of attention in mathematics and computer science since the 1950s. 

At the same time, with the rapid development of quantum computing, finding more problems that can take advantage of quantum speedup has become one of the focus issues in the field of quantum computing. If quantum algorithms with speedup advantages can be obtained for some basic matroid problems, it will bring quantum fast solutions to a series of specific application problems. Therefore, it is very interesting to explore what matroid problems can be accelerated by quantum computing and to discover the inherent quantum complexity of these problems.

However, currently there has been no work considering quantum algorithms for matroid problems. There is also little research linking the two terms of quantum computing and matroids. Kulkarni and Santha \cite{DBLP:conf/ciac/KulkarniS13}  discussed the quantum query complexity for computing the characteristic function of a matroid, but this work  involves no quantum algorithm for any basic matroid problems. Of course, there exist  papers discussing quantum algorithms for  some instances of matroid problems, since as mentioned before,  matroids are a very generalized concept. For example, Refs.
\cite{Aghaei2008AQA,quinn2020solving} discussed  quantum algorithms for finding the minimum spanning tree of a graph, which  can be regarded as an instance of finding a basis set  of a graph matroid with the least possible total weight. Refs.
\cite{DBLP:conf/stacs/AmbainisS06,DBLP:journals/mst/Dorn09} studied  quantum algorithms for finding the maximum matching of bipartite graphs, which  can be regarded as an instance of finding the intersection of two matroids. In addition, matroids are a special case of submodular functions, and quantum algorithms for optimization of submodular functions has been discussed \cite{DBLP:journals/qic/HamoudiRRS19}. However, it is worth pointing out that all these works mentioned above do not directly examine the problem from the perspective of matroids.
By the way, 
there is also few work considering the application of the matroid theory to quantum computing. In 2014, Amy, Maslov and Mosca
\cite{amy2014polynomial} proposed to optimize the quantum circuit composed of CNOT gates and $T$ gates based on the matroid partitioning algorithm. More recently, Mann\cite{DBLP:journals/corr/abs-2101-00211} established a classical heuristic algorithm to accurately calculate quantum  amplitudes, which maps the output  amplitudes of quantum circuits to the Tutte polynomial estimation problem of graph matroids. 

\subsection{Our contributions}
As can be seen from the above, there is still very little research on the intersection of matroids and quantum computing. Especially considering that matroid problems are generalizations of many important specific problems, it is very meaningful to explore whether quantum computing has the advantage of speeding up the solution of matroid problems.  Thus, in this paper we try to explore the possibility 
 of quantum speedup on some basic matroid problems. A matroid  is a tuple $\mathcal{M}=(V,\mathcal{I})$ where  $V$ is a finite ground set and
$\mathcal{I}\subseteq 2^{V}$  satisfies three conditions (See Definition  \ref{definition4matroid}).
 It is generally assumed that a matroid can only
 be accessed through the {\it independence  oracle} $\mathcal{O}_{i}$: given a matroid $\mathcal{M}=(V,\mathcal{I})$, for a subset $S\subseteq V$, 
$\mathcal{O}_{i}(S)=1$ iff $S\in \mathcal{I}$. An algorithm for matroid problems should query the oracle as least as possible.  We study how well quantum query 
 algorithms perform on the matroid property problems (in Table  \ref{Tab:complexity}). Given a matroid  $\mathcal{M}=(V,\mathcal{I})$ with $|V|=n$, 
 our main results are as follows.
 
\begin{enumerate}
\item For the calculation problem  of finding the girth or  the number of circuits (bases, flats, hyperplanes), a quantum algorithm has to query the independence oracle at least
  $\Omega(\sqrt{\binom{n}{\lfloor n/2 \rfloor}})$ times (
  see Theorem \ref{theorem4nopolymoialalgorithm}).
\item For the decision problem of deciding whether a matroid is uniform or Eulerian, a quantum algorithm has to query the independence oracle at least
  $\Omega(\sqrt{\binom{n}{\lfloor n/2 \rfloor}})$ times  (see Theorem \ref{theorem4nopolymoialalgorithm2}).
\item For the uniform matroid decision problem, there is an
  $O(\sqrt{\binom{n}{\lfloor n/2\rfloor}})$ quantum algorithm which is asymptotically
  optimal (see Theorem \ref{theorem4uniformmatroidupperbound}), and for the  problem of finding the girth, there is an 
  $O(\log n\sqrt{\binom{n}{\lfloor n/2\rfloor}})$ quantum algorithm 
  which is almost optimal (see Theorem \ref{theorem4findgirth}).
\item For the paving matroid decision problem, there is a quantum algorithm using
  $O(\sqrt{\binom{n}{\lfloor n/2\rfloor}})$ queries (see Theorem
  \ref{theorem4pavingmatroidupperbound}) and any quantum algorithm has to call at least
  $\Omega(\sqrt{\binom{n}{\lfloor n/2\rfloor}/n})$ queries
  (see Theorem \ref{theorem4pavingmatroidlowerbound}).
\end{enumerate}

\begin{center}
  \begin{table}[htb]
    \caption {The query complexity of matroid property problems, where the quantum bounds are obtained in this paper.
    CL: Classical Lower Bound. QL: Quantum Lower Bound. QU: Quantum Upper Bound.}
    \label{Tab:complexity}
    \renewcommand{\arraystretch}{1.6}
    \begin{adjustbox}{width=\textwidth}
      \begin{tabular}{|c|l|c|c|c|}      
        \hline
        Type & Matroid Property Problems& 
        \textbf{CL}& \textbf{QL}& \textbf{QU}\\
        \hline
        \multirow{5}{*}{\rotatebox{270}{Calculation Problems}}& 1.Find the girth of $\mathcal{M}$.&
        \multirow{5}{*}{$\Omega(\binom{n}{\lfloor n/2\rfloor}$}&
        \multirow{5}{*}{$\Omega(\sqrt{\binom{n}{\lfloor n/2\rfloor}})$}&
        $O(\log n\sqrt{\binom{n}{\lfloor n/2\rfloor}})$\\
        &2.Find the number of circuits of $\mathcal{M}$.&
        && ---\\
        &3.Find the number of bases of $\mathcal{M}$.&
        && ---\\
        &4.Find the number of flats of $\mathcal{M}$.&
        && ---\\
        &5.Find the number of hyperplanes of $\mathcal{M}$.&
        && ---\\
        \hline
        \multirow{5}{*}{\rotatebox{270}{Decision Problems}}& 6.Is $\mathcal{M}$ an uniform matroid?&
        \multirow{2}{*}{$\Omega(\binom{n}{\lfloor n/2\rfloor}$}&         
        \multirow{2}{*}{$\Omega(\sqrt{\binom{n}{\lfloor n/2\rfloor}})$}&
        $O(\sqrt{\binom{n}{\lfloor n/2\rfloor}})$\\
        &7.Is $\mathcal{M}$ an Eulerian matroid?&
        && ---\\ \cline{3-3}
        \cline{4-4}
        &8.Is $\mathcal{M}$ a paving matroid?& $\Omega(\binom{n}{\lfloor n/2\rfloor}/n)$
        &$\Omega(\sqrt{\binom{n}{\lfloor n/2\rfloor}/n})$&
        $O(\sqrt{\binom{n}{\lfloor n/2\rfloor}})$\\
        \cline{2-5}
        &9.Is $\mathcal{M}$ a trivial matroid?&\multirow{2}{*}{$\Omega(n)$}
        &\multirow{2}{*}{$\Omega(\sqrt{n})$}&\multirow{2}{*}{$O(\sqrt{n})$}\\
        &10.Is $\mathcal{M}$ a loopless matroid?&
        &&\\
        \hline
      \end{tabular}        
      \renewcommand{\arraystretch}{1}
    \end{adjustbox}
    \begin{tablenotes}
    \item[*] 
    \end{tablenotes}
  \end{table}
\end{center}

The remainder of this paper is organized as follows.
In Section 2, some basic concepts in matroid theorey and quantum query model are
introduced.
In Section 3, we obtain lower bounds for some matroid properties problems and 
present quantum algorithms for some of these problems.
Section 4 concludes this paper and presents some further problems.

\section{Preliminaries}\label{Pre}
\subsection{Matroid Theory}

In this subsection, we will give some basic definitions  used in this paper. Since 
matroid theory was established as a generalization of linear algebra and graph theory,
many concepts in matroid theory are derived from these two disciplines. So if one is
familiar with linear algebra and graph theory, it  will be helpful for
understanding the following definitions. One can refer to \cite{OUP/oxley11} for more details about matroid theory.

\begin{definition}[\textbf{Matroid}]
  \label{definition4matroid}    
  A \emph{matroid} is a combinational object defined by the tuple
  $\mathcal{M}=(V,\mathcal{I})$ for finite set $V$ and $\mathcal{I}\subseteq 2^V$ such
  that the following properties hold:
  \begin{enumerate}
  \item[\textbf{I0}.] $\emptyset \in \mathcal{I}$;
  \item[\textbf{I1}.] If $A'\subseteq A$ and $A\in\mathcal{I}$, then $A'\in\mathcal{I}$;
  \item[\textbf{I2}.] For any two sets $A,B\in\mathcal{I}$ with $|A|<|B|$, there exist an
    element $v\in B-A$ such that $A\cup\{v\}\in\mathcal{I}$.
  \end{enumerate}
\end{definition}

\begin{definition}[\textbf{Independent Set}]
  For a matroid $\mathcal{M}=(V,\mathcal{I})$, we call $S\subseteq V$ \emph{independent}
  if $S\in\mathcal{I}$ and \emph{dependent} otherwise.
\end{definition}

\begin{definition}[\textbf{Circuit}]
  For a matroid $\mathcal{M}=(V,\mathcal{I})$, if a dependent set $C\subseteq V$ satisfies
  that for any $e\in C$, $C-e\in\mathcal{I}$, we call $C$ be a \emph{circuit} of
  $\mathcal{M}$. If $C=\{e\}$ is a circuit, we call $e$ be a \emph{loop}. If
  $C=\{e_1,e_2\}$ is a circuit, we call $e_1,e_2$ are \emph{parallel}.
\end{definition}

\begin{definition}[\textbf{Girth}]
  For a matroid $\mathcal{M}$, the \emph{girth} $g(\mathcal{M})$ of $\mathcal{M}$ is the
  minimum circuit size of $\mathcal{M}$ unless $\mathcal{M}$ has no circuits, in which
  case, $g(\mathcal{M})=\infty$.
\end{definition}

\begin{definition}[\textbf{Rank}]
  For a matroid $\mathcal{M}=(V,\mathcal{I})$, we define the \emph{rank} of $\mathcal{M}$
  as $\text{rank}(\mathcal{M})=\max_{S\in\mathcal{I}}|S|$. Further, for any $S\subseteq V$ we
  define $\text{rank}_\mathcal{M}(S)\equiv\max_{T\subseteq S:T\in\mathcal{I}}|T|$, for
  simplicity we use rank($S$) or $r(S)$ in place of rank$_\mathcal{M}$($S$) if there is
  no unambiguous in the context.
\end{definition}

\begin{definition}[\textbf{Base}]
  For a matroid $\mathcal{M}=(V,\mathcal{I})$, if $B\in\mathcal{I}$ such that
  $\text{rank}(B)=\text{rank}(\mathcal{M})$, we call $B$  a \emph{base} of $\mathcal{M}$. By the
  matroid's property (\textbf{I2}), we can see that if $B_1$ and $B_2$ are two distinct
  bases of $\mathcal{M}$, then $|B_1|=|B_2|$. Furthermore, if $e$ is any element of $B_1$,
  then there is an element $f\in B_2$ such that $(B_1 - \{e\})\cup\{f\}$ is also a base.
  A matroid can also be defined by a set of bases, which is equivalent to Definition
  \ref{definition4matroid}.    
\end{definition}

\begin{definition}[\textbf{Free Matroid and Trivial Matroid}]
  \label{definition4freeandtrivialmatroid}
  For a matroid $\mathcal{M}=(V,\mathcal{I})$, $\mathcal{M}$ is called a \emph{free}
  matroid if $V$ is the only base and a \emph{trivial} matroid if $\emptyset$ is the only
  base.
\end{definition}

\begin{definition}[\textbf{Loopless Matroid}]
  \label{definition4looplessmatroid}
  For a matroid $\mathcal{M}=(V,\mathcal{I})$, $\mathcal{M}$ is called a \emph{loopless}
  matroid if all the singleton in $V$ are independent. In other words, $\mathcal{M}$ does
  not have any circuit with size 1.
\end{definition}

\begin{definition}[\textbf{Uniform Matroid}]
  For a matroid $\mathcal{M}=(V,\mathcal{I})$, let $|V|=n$. If there exists an integer
  $r$ with $0\leq r \leq n$ such that $\mathcal{I}=\{S\subseteq V:|S|\leq r\}$, 
  $\mathcal{M}$ is called an \emph{uniform matroid} of rank $r$ and denoted by $U_{r,n}$.
\end{definition}

\begin{definition}[\textbf{Paving Matroid}]
  For a matroid $\mathcal{M}=(V,\mathcal{I})$, if every circuit $C$ of $\mathcal{M}$
  satisfies that $|C|\geq \text{rank}(\mathcal{M})$, we call $\mathcal{M}$ a
  \emph{paving matroid}. Obviously, an uniform matroid is also a paving matroid.
\end{definition}

\begin{definition}[\textbf{Closure}]
  Given a matroid $\mathcal{M}=(V,\mathcal{I})$ on ground set $V$ with rank function
  $r$. Let cl be the function from $2^V$ to $2^V$ defined, for all $X\subseteq V$, by
  $\text{cl}(X)=\{x\in V:r(X\cup x)=r(X)\}$.
  This function is called the \emph{closure operator} of $\mathcal{M}$, and we call
  cl($X$) the \emph{closure} or \emph{span} of $X$ in $\mathcal{M}$.
\end{definition}

\begin{definition}[\textbf{Flat and Hyperplane}]
  Given a matroid $\mathcal{M}=(V,\mathcal{I})$ on ground set $V$ and its closure
  operator cl, a subset $X$ of $V$ for which $\text{cl}(X) = X$ is called a
  \emph{flat} or a \emph{closed set} of $\mathcal{M}$. A \emph{hyperplane} of
  $\mathcal{M}$ is a flat of rank $r(\mathcal{M})-1$. A subset $X$ of $V$ is a
  \emph{spanning set} of $\mathcal{M}$ if $\text{cl}(X) = V$. We also say that $X$ spans
  a subset $Y$ of $V$ if $Y\subseteq \text{cl}(X)$.
\end{definition}

\begin{definition}[\textbf{Isomorphic Matroids}]
  Two matroids
  $\mathcal{M}_1=(V_1,\mathcal{I}_1)$ and $\mathcal{M}_2=(V_2,\mathcal{I}_2)$ are
  \emph{isomorphic}, written $\mathcal{M}_1\cong\mathcal{M}_2$, if there is a bijection
  $\psi$ from $V_1$ to  $V_2$ such that, for all $X\subseteq V_1$, the set $\psi(X)$
  is independent in $\mathcal{M}_2$ if and only if $X$ is independent in
  $\mathcal{M}_1$. We call such a bijection $\psi$ an \emph{isomorphism} from
  $\mathcal{M}_1$ to $\mathcal{M}_2$.
\end{definition}

\begin{definition}[\textbf{Eulerian Matroid}]
  For a matroid  $\mathcal{M}=(V,\mathcal{I})$, it is called an \emph{Eulerian matroid}
  if there exist disjoint circuits $C_1,\cdots,C_p$ such that $V=C_1\cup\cdots\cup C_p$.
\end{definition}

\textbf{Notations\label{matroidnotation}.}
We will always denote by $\mathcal{M}$ a matroid $(V,\mathcal{I})$ on a finite ground set
$V$ with $\mathcal{I}\subseteq 2^V$.
Given a matroid $\mathcal{M}$, we will denote the ground set and the set of
independent sets of $\mathcal{M}$ by $V(\mathcal{M})$ and $\mathcal{I}(\mathcal{M})$
respectively. Similarly, $\mathcal{C}(\mathcal{M})$, $\mathcal{B}(\mathcal{M})$,
$\mathcal{F}(\mathcal{M})$, $\mathcal{H}(\mathcal{M})$,
is the set of circuits, bases, flats, hyperplanes of $\mathcal{M}$, respectively.
$\text{cl}_\mathcal{M}$ is the closure operator of $\mathcal{M}$.
A set having $r$ elements will be call a $r$-set. $[n]$ denotes $\{1,2,\cdots,n\}$.
Given a ground set $V$ with $|V|=n$, $A\subseteq V$ and an integer $1\leq r\leq n$,
$J_{r}=\{J\subseteq V:|J|=r\}$, $A_{e}= A\cup\{e\}$ for any $e\in V-A$,
$J_{A}=\{A_{e}:e\in V-A\}$.\\

\textbf{Matroid Representation\label{matroidrepresentation}.}
For the convenience of in the following text, we use a $2^{n}$-bit $0-1$ string to
represent a matroid on the ground set $V$ with $|V|=n$ and each bit represents a subset
of $V$. A matroid $\mathcal{M}$ on the ground set $V$, denoted by $x(\mathcal{M})$ with
$x(\mathcal{M})\in\{0,1\}^{2^{n}}$. Similarly, we use $\mathcal{M}(x)$ to denote the
matroid determined by a $x\in\{0,1\}^{2^n}$ if $x$ encodes a matroid.
For any $i\in [2^n]$, $x_{i}=1$ indicates that the subset corresponding to
the $i$-th bit is an independent set of $\mathcal{M}$, otherwise $x_{i}=0$.
If we know the rank $r$ of a matroid on the ground set $V$, we usually use
a $\binom{n}{r}$-bit $0-1$ string to represents the matroid, each bit representing a
$r$-set which is set to 1 when the $r$-set is a base of the matroid,
otherwise is set to 0. Given a 0-1 string, which represents a matroid, it is easy to know it is a subset representation or a $r$-set representation from the context.\\

\textbf{Independence Oracle.} Given a matroid $\mathcal{M}=(V,\mathcal{I})$, assume we can only access it by querying the independence
oracle $\mathcal{O}_{i}$. For a subset $S\subseteq V$,
$\mathcal{O}_{i}(S)=1$ if $S$ is an independent set of $\mathcal{M}$, otherwise
$\mathcal{O}_{i}(S)=0$.

\subsection{Quantum Computation}
For the basic concepts and notations on quantum computing, we refer the reader to the
textbook by Nielsen and Chuang \cite{CUP/nielsen10}.
Throughout this paper, we use two basic tools: (i) Grover's algorithm and (ii) Ambainis's
quantum adversary method.
\subsubsection{Quantum Query Model}
In the quantum query model\cite{DBLP:conf/focs/BealsBCMW98}, the input bits of a boolean function
$f:\{0,1\}^N\rightarrow\{0,1\}$ can be accessed by queries to an oracle
$\mathcal{O}$. We use $\mathcal{O}_x$ to denote the query transformation corresponding to
an input $x=(x_1,\cdots,x_N)$.
Given $i\in [N]$ to the oracle $\mathcal{O}_x$, it returns $x_{i}$.

A quantum computation with $T$ queries is a sequence of unitary transformations
\[
U_0\rightarrow\mathcal{O}_x\rightarrow U_1\rightarrow\mathcal{O}_x\rightarrow\cdots
\rightarrow U_{T-1}\rightarrow\mathcal{O}_x\rightarrow U_{T}
\]
where $U_j$ can be any unitary transformations that do not depend on the input
$x=x_1\cdots x_N$. $\mathcal{O}_x$ are query (oracle) transformations. The oracle
$\mathcal{O}_x$ can be defined as
$\mathcal{O}_x:|i,b,z\rangle\rightarrow|i,b\oplus x_i,z\rangle$, where $\oplus$ is
\emph{exclusive or} operation. Also, we can define $\mathcal{O}_x$ as
$\mathcal{O}_x:|i,b,z\rangle\rightarrow (-1)^{b\cdot x_i}|i,b,z\rangle$, 
where $i$ is the query register, $b$ is the answer register, and $z$ is the working
register.These two definitions of $\mathcal{O}_x$ are equivalent: one query of one type
can be simulated by one query of the other type. The quantum computation are the following
three steps:
\begin{enumerate}
\item[1:] Prepare the initial state to $|0\rangle$.
\item[2:] Then apply $U_0, \mathcal{O}_x,\cdots,\mathcal{O}_x,U_T$.
\item[3:] Measure the final state.
\end{enumerate}

The result of the computation is the rightmost bit of the state obtained by measurement.
The quantum computation computes $f$ with bounded error if, for every $x=(x_1,\cdots,x_N)$
, the probability that the rightmost bit of
$U_TO_xU_{T-1}\cdots O_xU_0|0\rangle$ equals $f(x_1,\cdots,x_N)$ is at least $1-\epsilon$
for some fixed $\epsilon < \frac{1}{2}$. The quantum query complexity of $f$ is the
number of queries needed to compute $f$.

This model can be  extended to functions defined on a larger set or functions having
more than two values.

\subsubsection{Quantum Search}

A search problem in an $n$ elements set $[n]$ is a subset $J\subseteq [n]$ with the
characteristic function $f:[n]\rightarrow\{0,1\}$ such that
\[
f(x) = \left\{\begin{array}{lc}
1,& \mbox{if}\; x\in J,\\
0,& \mbox{otherwise}.
\end{array}\right.
\]
Any $x\in J$ is called a solution of the search problem.

In this paper, we use a generalization of Grover's search algorithm as a quantum
sub-routine, denoted by \emph{GroverAlgorithm}, to determine whether there is any
solution in a search space of size $N$. The quantum sub-routine needs $O(\sqrt{N})$
queries. We state the generalization of Grover's search algorithm as the
following theorem.

\begin{theorem}[see \cite{DBLP:conf/stoc/Grover96, WOL/Boyer98}]
  \label{theorem4groveralgorithm}
  Let $J$ be a search problem in an $n$ elements set $[n]$ and $f$ be the 
  characteristic function of $J$. Given a search space $S\subseteq [n]$ with $|S|=N$,
  determining that whether $J\cap S$ is empty can be done in $O(\sqrt{N})$
  quantum queries to $f$ with probability of at least a constant.
\end{theorem}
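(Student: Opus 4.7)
The plan is to invoke Grover's algorithm, combined with the unknown-multiplicity extension due to Boyer et al., on the subset $S$. First, I would set up the standard Grover iterate $G = D_S \cdot O_f$ on the Hilbert space spanned by $\{|x\rangle : x \in S\}$, where $O_f$ flips the phase of basis states with $f(x)=1$ and $D_S = 2|\psi_S\rangle\langle\psi_S| - I$ reflects about the uniform superposition $|\psi_S\rangle = \frac{1}{\sqrt{N}}\sum_{x\in S}|x\rangle$. A standard two-dimensional invariant-subspace analysis shows that $G^k|\psi_S\rangle$ has amplitude $\sin((2k+1)\theta)$ on marked states, where $\sin\theta = \sqrt{M/N}$ and $M = |J \cap S|$; hence $k = \Theta(\sqrt{N/M})$ iterations yield a marked element with constant probability whenever $M \geq 1$.

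Because $M$ is unknown and may be zero, I would then wrap the iterate in the doubling strategy: pick a uniformly random $j \in [0,m)$, apply $G^j$ to $|\psi_S\rangle$, measure, verify the outcome with one additional classical query to $f$, and on failure multiply $m$ by a fixed factor $c \in (1,2)$ and repeat. The outer loop is aborted once the cumulative query count first exceeds a sufficiently large constant multiple of $\sqrt{N}$, at which point the algorithm declares $J \cap S = \emptyset$. When $M = 0$ no verified marked element can ever appear, so the output is correct by construction; when $M \geq 1$, the geometric schedule reaches $m \geq c_0 \sqrt{N/M}$ within $O(\log\sqrt{N})$ rounds, and from that point on each round succeeds with at least a fixed positive probability.

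The main technical step is the probability bound for the random-iteration subroutine: averaging $\sin^2((2j+1)\theta)$ over $j$ drawn uniformly from $[0,m)$ is at least a positive constant once $m \geq c_0/\sin\theta$, so any round whose parameter $m$ exceeds this threshold returns a marked element with at least a fixed probability. The geometric sum of queries across all rounds is dominated by its final term, which is $O(\sqrt{N/M}) \leq O(\sqrt{N})$, and the cumulative failure probability is bounded by a geometric series that can be driven below any desired constant by running constantly many successful-threshold rounds. This closes the argument and reduces the whole analysis to the standard proofs in the cited references.
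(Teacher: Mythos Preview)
Your proposal is correct and follows the standard argument from the cited references. Note, however, that the paper does not supply its own proof of this theorem: it is stated as a known result with a pointer to \cite{DBLP:conf/stoc/Grover96, WOL/Boyer98}, so there is no in-paper proof to compare against. Your sketch is precisely the Boyer--Brassard--H{\o}yer--Tapp analysis that the citation refers to.
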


\subsubsection{Quantum Query Lower Bounds}
In this paper, we use a quantum adversary method introduced by Ambainis to prove lower
bounds for quantum query complexity.

\begin{theorem}[\textbf{Ambainis's quantum adversary method}\cite{DBLP:conf/stoc/Ambainis00}]
  \label{theorem4quantumadersarymethod}
  Let $f(x_1,\cdots,x_n)$ be a function of $n$ variables with values from a some
  finite set and $X,Y$ be two sets of inputs such that $f(x)\neq f(y)$ if $x\in X$ and
  $y\in Y$. Let $R\subset X\times Y$ be a relation such that
  \begin{enumerate}
  \item For every $x\in X$, there exist at least $m$ different $y\in Y$ such that
    $(x,y)\in R$;
  \item For every $y\in Y$, there exist at least $m'$ different $x\in X$ such that
    $(x,y)\in R$;
  \item For every $x\in X$ and $i\in [n]$, there are at most $l$ different
    $y\in Y$ such that $(x,y)\in R$ and $x_i\neq y_i$;
  \item For every $y\in Y$ and $i\in [n]$, there are at most $l'$ different
    $x\in X$  such that $(x,y)\in R$ and $x_i\neq y_i$;    
  \end{enumerate}
  Then any quantum algorithm computing $f$ uses $\Omega(\sqrt{\frac{mm'}{ll'}})$ queries.
\end{theorem}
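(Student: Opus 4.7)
The plan is to prove the lower bound via a progress (potential) argument that tracks how well a quantum algorithm can simultaneously tell apart all related pairs $(x,y)\in R$. For every input $z\in X\cup Y$, let $|\varphi_t^z\rangle$ denote the state of the algorithm after $t$ queries when run on $z$, and consider the progress function
\[
W_t \;=\; \sum_{(x,y)\in R} \alpha_x\,\beta_y\,\langle \varphi_t^x \mid \varphi_t^y\rangle,
\]
where $\alpha_x,\beta_y>0$ are weights to be chosen. Since the unitaries $U_0,U_1,\dots,U_T$ are input-independent, inner products only change under the oracle $\mathcal{O}_x$, so it suffices to bound $W_T - W_0$ and the per-query change $|W_{t+1}-W_t|$.

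First I would establish the two endpoint estimates. At $t=0$ the states are independent of the input, so $\langle\varphi_0^x|\varphi_0^y\rangle = 1$ and $W_0$ equals the total weight $\sum_{(x,y)\in R}\alpha_x\beta_y$. At $t=T$, since $f(x)\neq f(y)$ for $(x,y)\in R$ and the algorithm succeeds with error at most $\varepsilon<1/2$, a standard calculation (using that outcomes on $x$ and $y$ differ with probability $\geq 1-2\varepsilon$) shows $|\langle \varphi_T^x|\varphi_T^y\rangle|\le 2\sqrt{\varepsilon(1-\varepsilon)}$, hence $|W_T|\le 2\sqrt{\varepsilon(1-\varepsilon)}\cdot W_0$. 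So between $t=0$ and $t=T$, the progress must drop by a constant fraction of $W_0$.

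Next I would bound the per-query drop. Writing the pre-query states as $|\varphi_t^z\rangle = \sum_{i,b,w}\gamma^z_{i,b,w}|i,b,w\rangle$, the oracle $\mathcal{O}_z$ flips the sign of the amplitude on $|i,1,w\rangle$ iff $z_i = 1$, so the change in $\langle \varphi_t^x|\varphi_t^y\rangle$ is concentrated on indices $i$ with $x_i\neq y_i$. A Cauchy--Schwarz estimate of the form
\[
|W_{t+1}-W_t|\;\le\;2\sum_{i}\Bigl(\sum_{(x,y)\in R:\,x_i\neq y_i}\alpha_x\beta_y\,|\gamma_{i,\cdot,\cdot}^x||\gamma_{i,\cdot,\cdot}^y|\Bigr)
\]
then needs to be bounded using conditions 3 and 4 of the theorem, which limit how many pairs $(x,y)\in R$ can differ at a fixed coordinate. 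Choosing the symmetric weights $\alpha_x = 1/\sqrt{|R_x|}$ and $\beta_y = 1/\sqrt{|R^y|}$, where $R_x=\{y:(x,y)\in R\}$ and $R^y=\{x:(x,y)\in R\}$, and applying Cauchy--Schwarz on the sum over $(x,y)$ for each fixed $i$, the per-query drop becomes at most $2\sqrt{ll'}$ after the weighted normalization, while $W_0 \ge \sqrt{mm'}$ follows from conditions 1 and 2.

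Combining the endpoint gap $W_0 - W_T = \Omega(W_0) = \Omega(\sqrt{mm'})$ with the step bound $|W_{t+1}-W_t|=O(\sqrt{ll'})$ yields $T = \Omega(\sqrt{mm'/(ll')})$. The main obstacle I anticipate is the per-query estimate: getting the right $\sqrt{ll'}$ factor requires balancing the two Cauchy--Schwarz applications (one splitting the product $|\gamma_{i,\cdot}^x||\gamma_{i,\cdot}^y|$ and one splitting the weights $\alpha_x\beta_y$) exactly against the symmetric weight choice, and showing that the weights $\alpha_x,\beta_y$ normalize both the initial progress and the step bound simultaneously. Everything else—unitarity between queries and the endpoint inequality from bounded error—is routine.
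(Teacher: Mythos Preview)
The paper does not prove this theorem at all: it is quoted verbatim as a tool from Ambainis's original paper \cite{DBLP:conf/stoc/Ambainis00} and used as a black box in the subsequent lower-bound proofs. So there is no ``paper's own proof'' to compare against. Your sketch is essentially the standard hybrid/progress argument that Ambainis himself gave, so conceptually you are on the right track and aligned with the cited source.

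That said, the specific normalization you propose does not quite work as written. With $\alpha_x=1/\sqrt{|R_x|}$ and $\beta_y=1/\sqrt{|R^y|}$, conditions 1 and 2 give \emph{upper} bounds $\alpha_x\le 1/\sqrt{m}$ and $\beta_y\le 1/\sqrt{m'}$, so they do not yield $W_0\ge\sqrt{mm'}$; in fact $W_0$ still carries a factor of $|R|$ (or of $|X|,|Y|$) that must cancel against the same factor in the step bound. Ambainis's original argument takes the unweighted potential $W_t=\sum_{(x,y)\in R}|\langle\varphi_t^x|\varphi_t^y\rangle|$, so that $W_0=|R|$, and then shows via the AM--GM splitting $2ab\le ca^2+c^{-1}b^2$ (with $c$ chosen to balance the $x$- and $y$-sides) together with conditions 1--4 that each query decreases $W$ by at most $O(\sqrt{ll'/(mm')})\cdot|R|$. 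The ratio then gives the claimed $\Omega(\sqrt{mm'/(ll')})$. If you want to keep weights, you need to track how they enter both the initial value and the step bound and verify the common factors cancel; the bare claims ``$W_0\ge\sqrt{mm'}$'' and ``step $\le 2\sqrt{ll'}$'' are not correct as stated.
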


\section{Quantum lower bounds and algorithms for matroid property problems}\label{matroidpropertyproblems}

\subsection{A Uniform Lower Bounds For Matroid Properties}

As mentioned in \cite{DBLP:journals/siamcomp/JensenK82}, for a large number of matroid
properties there is no good algorithm for determining whether these properties
holds for general matroids. These properties include uniform matroid, paving matroid,
Eulerian matroid and bipartite matroid decision problems and some calculation
problems, such as finding the girth, circuit number, base number, flat number, hyperplane
number and the size of the largest hyperplane.
We will prove that there is no 
quantum query algorithm with polynomial independence oracles for these problems.

\begin{theorem}
  \label{theorem4nopolymoialalgorithm}
  For the calculation problem  of finding the girth  or the number of circuits (bases, flats, hyperplanes), a quantum algorithm has
  to query the independence oracle at least
  $\Omega(\sqrt{\binom{n}{\lfloor n/2 \rfloor}})$ times.
\end{theorem}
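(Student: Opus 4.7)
The plan is to apply Ambainis's adversary method (Theorem~\ref{theorem4quantumadersarymethod}) to a single pair of matroid families and verify that each of the five target functions already separates them, so that one adversary calculation proves the bound uniformly. Let $r=\lfloor n/2\rfloor$ and, for each $r$-subset $C\subseteq V$, let $\mathcal{M}_C$ be the \emph{one-relaxation} of $U_{r,n}$ at $C$, defined by
\[
\mathcal{I}(\mathcal{M}_C)=\{S\subseteq V:|S|\leq r\text{ and }S\neq C\}.
\]
I would first check $\mathcal{M}_C$ is a matroid: (I0) and (I1) are immediate; for (I2) the only worrisome configuration is $|A|=r-1$, $|B|=r$ with the unique $v\in B-A$ satisfying $A\cup v=C$, but then $B=A\cup v=C$, contradicting $B\in\mathcal{I}(\mathcal{M}_C)$. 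Set $X=\{x(U_{r,n})\}$, $Y=\{x(\mathcal{M}_C):C\subseteq V,\ |C|=r\}$ and $R=X\times Y$.

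Next I would verify that each target function separates $X$ from $Y$. The girth drops from $r+1$ in $U_{r,n}$ to $r$ in $\mathcal{M}_C$, since $C$ becomes a new minimum-size circuit. The base count changes from $\binom{n}{r}$ to $\binom{n}{r}-1$. An $(r+1)$-subset $S$ is a circuit of $\mathcal{M}_C$ exactly when $C\not\subseteq S$, so together with $C$ itself the circuit count shifts by $1-(n-r)\neq 0$ for $r<n-1$. For flats and hyperplanes, a layer-by-layer computation of $\text{cl}_{\mathcal{M}_C}$ shows: subsets of size $<r-1$ are unaffected; each $(r-1)$-subset of $C$ now has closure $C$ and is no longer a flat; $C$ itself becomes a new flat of rank $r-1$; and no set of rank $r$ other than $V$ is closed. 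Hence both the flat count and the hyperplane count shift by $1-r\neq 0$ for $r\geq 2$.

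For the adversary parameters, observe that in the subset representation $x(U_{r,n})$ and $x(\mathcal{M}_C)$ differ only at the single position indexed by $S=C$. Therefore $m=|Y|=\binom{n}{r}$ and $m'=|X|=1$, and for any fixed query position $i\subseteq V$ at most one $y\in Y$ (namely $x(\mathcal{M}_i)$, and only when $|i|=r$) can differ from the unique $x$, giving $l=l'=1$. Theorem~\ref{theorem4quantumadersarymethod} then yields
\[
\Omega\!\left(\sqrt{mm'/ll'}\right)=\Omega\!\left(\sqrt{\tbinom{n}{\lfloor n/2\rfloor}}\right)
\]
uniformly for all five problems.

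The main obstacle I anticipate is the closure-operator bookkeeping for the flat and hyperplane counts: one must rule out the appearance of unexpected flats of rank $r-1$ or $r$ in $\mathcal{M}_C$ and make sure the $r$ missing $(r-1)$-subsets of $C$ are not compensated by some other newly-created flat. Everything else follows directly from the near-uniform structure of $\mathcal{M}_C$ and from Ambainis's theorem, with the pleasant feature that $|X|=1$ forces $l'=m'=1$ trivially.
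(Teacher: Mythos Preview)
Your proposal is correct and is essentially identical to the paper's proof: the matroid $\mathcal{M}_C$ is exactly the paper's $A^{1}_{r,n}$, and the adversary relation, with $|X|=1$, $|Y|=\binom{n}{r}$ and a single differing coordinate, is set up the same way. If anything, your write-up is more complete, since the paper simply asserts $F(x(U_{r,n}))\neq F(x(A^{1}_{r,n}))$ for all five functions without the circuit/flat/hyperplane bookkeeping you sketch.
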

\begin{proof}
  \label{proof4nopolymoialalgorithm}
  Let $V$ be a $n$-set,$r$ be an integer with $0\leq r\leq n$, and $U_{r,n}$ be an uniform
  matroid with rank $r$ on $V$. For a $r$-set $A\subseteq V$, $A^{1}_{r,n}$ is a
  matroid on $V$ with bases all $r$-set excepts $A$.
  We encode every matroid with rank $r$ on $V$ to a  $\binom{n}{r}$-bit
  $0-1$ string(see Matroid Representation in Section \ref{matroidrepresentation}). For every valid
  representation $x\in\{0,1\}^{\binom{n}{r}}$ and any $i\in[\binom{n}{r}]$,
  $x[i]=1$ indicates that a $r$-set which is encoded to the $i$-the bit is a base of
  $\mathcal{M}(x)$.
  Let $X=\{x(U_{r,n})\in\{0,1\}^{\binom{n}{r}}\}$,
  $Y=\{x(A^{1}_{r,n})\in\{0,1\}^{\binom{n}{r}}:\forall A\subseteq V\;\text{with}\; |A|=r \}$.
  Define the following functions from $\{0,1\}^{\binom{n}{r}}$ to a finite set:
  
  $g:\{0,1\}^{\binom{n}{r}}\rightarrow [r+1]$,
  
  $c:\{0,1\}^{\binom{n}{r}}\rightarrow [\binom{n}{r}]$,
  
  $b:\{0,1\}^{\binom{n}{r}}\rightarrow [\binom{n}{r}]$,
  
  $f:\{0,1\}^{\binom{n}{r}}\rightarrow [2^{n}]$,
  
  $h:\{0,1\}^{\binom{n}{r}}\rightarrow [\binom{n}{r}]$,\newline
  which corresponds to finding
  the girth, the number of circuits, the number of bases, the number of flats, the
  number of hyperplanes, the size of the largest hyperplane
  of $\mathcal{M}$, respectively.

  For any function $F\in\{g,c,b,f,h\}$, we have
  $F(x(U_{r,n}))\neq F(x(A^{1}_{r,n}))$.
  Let $R\subseteq X\times Y$ be a relation such that
  \begin{enumerate}
  \item For every $x\in X$, there are $\binom{n}{r}$ different $y\in Y$ such that
    $(x,y)\in R$.
  \item For every $y\in Y$, there is one $x\in X$ such that $(x,y)\in R$.
  \item For every $x\in X$ and $i\in[\binom{n}{r}]$, there is at most one $y\in Y$
    such that $(x,y)\in R$ and $x_i\neq y_i$.
  \item For every $y\in Y$ and $i\in[\binom{n}{r}]$, there is at most one $x\in X$
    such that $(x,y)\in R$ and $x_i\neq y_i$.
  \end{enumerate}
  
  It can be seen that any quantum algorithm
  computing $F$ uses $\Omega(\sqrt{\binom{n}{r}})$ queries.
  When $r=\lfloor n/2\rfloor$, the theorem is proved.
\end{proof}

\begin{theorem}
  \label{theorem4nopolymoialalgorithm2}
  Let $\mathcal{M}=(V,\mathcal{I})$ be a matroid with $|V|=n$. Then any quantum
  algorithm to decide whether $\mathcal{M}$ is uniform or Eulerian has to query
  the independence oracle at least $\Omega(\sqrt{\binom{n}{\lfloor n/2\rfloor}})$ times.
\end{theorem}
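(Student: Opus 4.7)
The plan is to invoke Ambainis's adversary method (Theorem~\ref{theorem4quantumadersarymethod}) separately for the uniform and Eulerian decision problems, using families of matroids analogous to the pair $(U_{r,n}, A^{1}_{r,n})$ from the proof of Theorem~\ref{theorem4nopolymoialalgorithm}, but chosen so that the property in question changes between the two families.

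For the uniform matroid decision the same $X=\{x(U_{r,n})\}$ and $Y=\{x(A^{1}_{r,n}) : A\in\binom{V}{r}\}$ with $r=\lfloor n/2\rfloor$ work directly, because every $\mathcal{M}(y)$ is a rank-$r$ matroid with a dependent $r$-subset and is therefore not uniform. The adversary parameters $m=\binom{n}{r}$, $m'=l=l'=1$ are identical to those in the proof of Theorem~\ref{theorem4nopolymoialalgorithm}, and the bound follows at once.

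For the Eulerian decision I would split on the parity of $n$. If $n$ is odd, take $r=(n-1)/2$ and reuse the same $X,Y$. Then $U_{r,n}$ is not Eulerian, because its circuits have size $(n+1)/2$ and this does not divide the odd integer $n$, whereas $A^{1}_{r,n}$ is Eulerian via the partition $V=A\sqcup(V\setminus A)$ into two disjoint circuits ($A$ of size $r$, and $V\setminus A$ of size $r+1$ not containing $A$, hence a circuit of $A^{1}_{r,n}$). If $n$ is even, take $r=n/2$; now $U_{r,n}$ and $A^{1}_{r,n}$ are both non-Eulerian, so I would delete a complementary pair of bases: for each $r$-set $A$ let $\mathcal{M}_{A}$ be the rank-$r$ matroid whose bases are all $r$-sets except $A$ and $V\setminus A$. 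The partition $V=A\sqcup(V\setminus A)$ then exhibits $\mathcal{M}_{A}$ as Eulerian. Taking $X=\{x(U_{n/2,n})\}$ and $Y=\{x(\mathcal{M}_{A})\}$ with $A$ identified with $V\setminus A$ (so $|Y|=\frac{1}{2}\binom{n}{n/2}$), each $y\in Y$ differs from the unique $x$ in exactly the two bits indexed by $A$ and $V\setminus A$, and for every size-$n/2$ bit $T$ exactly one $y$ differs from $x$ at $T$; Ambainis then yields $m=\frac{1}{2}\binom{n}{n/2}$ and $m'=l=l'=1$, giving the claimed bound.

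The main obstacle is verifying that the doubly-punctured family $\mathcal{M}_{A}$ in the even-$n$ case is actually a matroid. Because $A$ and $V\setminus A$ are disjoint, for any bases $B_{1},B_{2}$ and any $e\in B_{1}\setminus B_{2}$ at most one element of $B_{2}\setminus B_{1}$ produces $A$ under exchange and at most one produces $V\setminus A$; a short case analysis on $|B_{2}\setminus B_{1}|\in\{1,2,\geq 3\}$ then shows that a safe exchange element always exists (in the degenerate case $|B_{2}\setminus B_{1}|=1$ the forced exchange returns $B_{2}$ itself, which is a base by hypothesis). Once the matroid axioms are checked, the remaining adversary counting is identical to the one in the proof of Theorem~\ref{theorem4nopolymoialalgorithm}.
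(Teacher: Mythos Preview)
Your proposal is correct and follows essentially the same route as the paper: the paper also uses $X=\{x(U_{r,n})\}$ with $r=\lfloor n/2\rfloor$, pairs it against $Y=\{x(A^{1}_{r,n})\}$ for the uniform problem and for the Eulerian problem when $n$ is odd, and against the doubly-punctured family $A^{2}_{r,n}$ (your $\mathcal{M}_{A}$, bases all $r$-sets except $A$ and $V\setminus A$) when $n$ is even, with the same adversary parameters $m=\binom{n}{r}$ (respectively $\tfrac{1}{2}\binom{n}{r}$) and $m'=l=l'=1$. The only differences are cosmetic: the paper argues $U_{r,n}$ is non-Eulerian because any two $(r{+}1)$-sets intersect rather than via your divisibility remark, and it simply asserts that $A^{2}_{r,n}$ is a matroid where you sketch the basis-exchange verification.
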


\begin{proof}
  \label{proof4nopolymoialalgorithm2}
  Let $r=\lfloor n/2\rfloor$ and $U_{r,n}$ be the uniform matroid with rank $r$ on
  the ground set $V$. We can see that $U_{r,n}$ is not Eulerian, since the size of
  its circuits are $r+1$ implies that any two circuits are intersectant. For any
  $r$-set $A\subseteq V$, let $A^{1}_{r,n}$ be the matroid on $V$ with bases all
  $r$-set except $A$. Similarly, when $n$ is even, let $A^{2}_{r,n}$ be the matroid
  on $V$ with bases all $r$-set except $A$ and $V-A$.
  (It is not difficult to verify that $A^{1}_{r,n}$ and $A^{2}_{r,n}$ are matroids
  with rank $r$ on $V$.)

  In the following, we show that $A^{1}_{r,n}$ and $A^{2}_{r,n}$ are
  Eulerian matroids based on the parity of $n$. When $n$ is odd,
  we can see that $|V-A|=r+1$ and any proper subset of $|V-A|$ is independent in
  $A^{1}_{r,n}$. So $V-A$ is a circuit of $A^{1}_{r,n}$. By the definition of
  $A^{1}_{r,n}$, we also know that $A$ is a circuit of $A^{1}_{r,n}$. Thus we can
  show that $A^{1}_{r,n}$ is Eulerian for the union of disjoint sets $A$ and $V-A$
  is $V$. When $n$ is even, by the definition of $A^{2}_{r,n}$,
  we can see that $A$ and $V-A$ are two disjoint circuits of $A^{2}_{r,n}$. Thus
  $A^{2}_{r,n}$ is also Eulerian.

  We encode every matroid with rank $r$ on $V$ to a $\binom{n}{r}$-bit 0-1 string
  (see Matroid Representation in Section \ref{matroidrepresentation}).
  Let $X=\{x(U_{r,n})\in\{0,1\}^{\binom{n}{r}}\}$,                                           
  $Y_1=\{x(A^{1}_{r,n})\in\{0,1\}^{\binom{n}{r}}:\forall A\subseteq V \text{with} |A|=r\}$,
  $Y_2=\{x(A^{2}_{r,n})\in\{0,1\}^{\binom{n}{r}}:\forall A\subseteq V \text{with} |A|=r\}$.
  Define two Boolean functions
  $u:\{0,\}^{\binom{n}{r}}\rightarrow\{0,1\}$ and
  $e:\{0,\}^{\binom{n}{r}}\rightarrow\{0,1\}$,
  where $u(x)=1$ if and only if $\mathcal{M}(x)$ is uniform and $e(x)=1$ if and
  only if $\mathcal{M}(x)$ is Eulerian. Let $f\in\{u,e\}$. $Y=Y_2$ when $f=e$ and
  $n$ is even, otherwise $Y=Y_1$. We can see that $f(x)\neq f(y)$ for any $x\in X$
  and $x\in Y$. Let $R\subseteq X\times Y$ be a relation such that
  \begin{enumerate}
  \item For every $x\in X$, there are $\binom{n}{r}$($\frac{\binom{n}{r}}{2}$ when
    $f=e$ and $n$ is even) different $y\in Y$ such that $(x,y)\in R$.
  \item For every $y\in Y$, there one $x\in X$ such that $(x,y)\in R$.
  \item For every $x\in X$ and $i\in[\binom{n}{r}]$, there is at most one $y\in Y$
    such that $(x,y)\in R$ and $x_i\neq y_i$.
  \item For every $y\in Y$ and $i\in[\binom{n}{r}]$, there is at most one $x\in X$
    such that $(x,y)\in R$ and $x_i\neq y_i$.
  \end{enumerate}
  It can be seen that any quantum algorithm computing $f$ uses
  $\Omega(\sqrt{\binom{n}{r}})$ queries. When $r=\lfloor n/2\rfloor$,
  the theorem is proved.
\end{proof}

\subsection{Quantum Algorithm for Uniform Matroid Decision Problem}
As mentioned in \cite{CUP/robinson80}, the property
$\lfloor n/2\rfloor$-UNIFORM is a hardest property with respect to the independence
oracle. It requires $\Omega(\binom{n}{\lfloor n/2\rfloor})$ independence 
oracles to determine  whether a matroid is uniform on a ground set with $n$ elements.
We will present a quantum algorithm using
$O(\sqrt{\binom{n}{\lfloor n/2\rfloor}})$ independence oracles to solve this
problem and thus the  algorithm is  asymptotically optimal.

\begin{fact}
  \label{fact4computingrank}
  Let  $\mathcal{M}=(V,\mathcal{I})$  be a matroid with $|V|=n$. There is a greedy
  algorithm to compute the rank of $\mathcal{M}$
  using $O(n)$ independence oracles.
\end{fact}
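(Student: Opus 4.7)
The plan is to describe the standard matroid greedy algorithm and argue its correctness via axiom (\textbf{I2}). Fix an arbitrary ordering $v_1,\ldots,v_n$ of $V$, set $S_0=\emptyset$, and for $i=1,\ldots,n$ query the independence oracle on $S_{i-1}\cup\{v_i\}$: if $\mathcal{O}_i(S_{i-1}\cup\{v_i\})=1$, put $S_i=S_{i-1}\cup\{v_i\}$; otherwise put $S_i=S_{i-1}$. The algorithm outputs $|S_n|$ and uses exactly $n$ independence queries, which gives the $O(n)$ bound.

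For correctness, $S_n\in\mathcal{I}$ by construction, so $|S_n|\le\text{rank}(\mathcal{M})$. For the reverse inequality, suppose towards a contradiction that some base $B$ satisfies $|B|>|S_n|$. Applying axiom (\textbf{I2}) to $S_n,B\in\mathcal{I}$ yields an element $v\in B\setminus S_n$ with $S_n\cup\{v\}\in\mathcal{I}$. Write $v=v_j$. At the moment we processed $v_j$ we had $S_{j-1}\subseteq S_n$, so $S_{j-1}\cup\{v_j\}\subseteq S_n\cup\{v\}\in\mathcal{I}$; by (\textbf{I1}) the set $S_{j-1}\cup\{v_j\}$ is independent, so the algorithm would have set $S_j=S_{j-1}\cup\{v_j\}$, placing $v$ in $S_n$ and contradicting $v\in B\setminus S_n$. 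Hence $|S_n|=\text{rank}(\mathcal{M})$.

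There is no real obstacle here; this is a textbook matroid greedy argument, and the only points worth making explicit in the write-up are that the algorithm is classical (the fact makes no quantum claim), that each of the $n$ queries is on a set of size at most $n$, and that the output returned is the size $|S_n|$ rather than the set itself.
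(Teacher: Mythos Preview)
Your proposal is correct and matches the paper's approach exactly: the paper also describes the greedy procedure $J:=\emptyset$, then for each $e\in V$ update $J:=J\cup\{e\}$ if independent, and concludes with the observation that all $n$ elements must be traversed. In fact your write-up is more thorough, since the paper omits the correctness argument via (\textbf{I1}) and (\textbf{I2}) and simply appeals to this being the unweighted greedy algorithm.
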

\begin{proof}
  Let $J=\emptyset$, for every $e\in V$, if $J\cup \{e\}$ is an independent set in
  $\mathcal{M}$, then update $J$ (i.e. $J:=J\cup\{e\}$, otherwise discard $e$).
  This procedure, denoted by \emph{GreedyAlgorithm}, is essentially a greedy algorithm
  with all the elements being the same
  weight. We must traverse all elements in $V$. The fact is proved.
\end{proof}

\begin{theorem}
  \label{theorem4uniformmatroidupperbound}
  Let $\mathcal{M}=(V,\mathcal{I})$ be a matroid with $|V|=n$. There is a quantum
  algorithm to decide whether it is uniform  using
  $O(\sqrt{\binom{n}{\lfloor n/2\rfloor}})$ independence oracles.
\end{theorem}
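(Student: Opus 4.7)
The strategy is to reduce uniformity testing to an unstructured search over $r$-subsets. Observe that a matroid $\mathcal{M}$ on $|V|=n$ is uniform of rank $r$ if and only if $\text{rank}(\mathcal{M})=r$ and every $r$-subset of $V$ is independent. The ``only if'' direction is immediate from the definition of $U_{r,n}$. For the ``if'' direction, if every $r$-subset is independent then by axiom \textbf{I1} every subset of size $\le r$ is independent, while no set of size $>r$ can be independent since that would force the rank to exceed $r$; hence $\mathcal{I}=\{S\subseteq V:|S|\le r\}$, which is precisely $U_{r,n}$.

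Based on this equivalence, the algorithm has two phases. First, invoke \emph{GreedyAlgorithm} from Fact \ref{fact4computingrank} to compute $r=\text{rank}(\mathcal{M})$ using $O(n)$ independence queries. Second, run \emph{GroverAlgorithm} on the search space $J_r$ of all $r$-subsets of $V$, with characteristic function $f(S)=1-\mathcal{O}_i(S)$, which marks the dependent $r$-subsets. By Theorem \ref{theorem4groveralgorithm}, one can decide with bounded error whether any marked element exists in $O(\sqrt{|J_r|})=O(\sqrt{\binom{n}{r}})$ queries to $\mathcal{O}_i$. The final output is ``uniform'' iff Grover returns no marked element.

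The overall query complexity is $O(n)+O(\sqrt{\binom{n}{r}})$. Since $\binom{n}{r}\le\binom{n}{\lfloor n/2\rfloor}$ for every admissible $r$, and since the central binomial coefficient grows like $\Theta(2^n/\sqrt{n})$ so that $n=o(\sqrt{\binom{n}{\lfloor n/2\rfloor}})$, this bound collapses to $O(\sqrt{\binom{n}{\lfloor n/2\rfloor}})$, matching the lower bound in Theorem \ref{theorem4nopolymoialalgorithm2}.

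I do not expect any deep obstacle: the reasoning is purely a reduction plus Grover search. The main items to verify carefully are (i) the matroid-theoretic equivalence above, which relies only on \textbf{I1} and the definition of rank, and (ii) bookkeeping of success probabilities, since one constant-probability Grover call suffices but may be amplified by a constant number of repetitions without affecting the asymptotic query count. The corner cases $r=0$ (trivial matroid) and $r=n$ (free matroid) are handled automatically: in each case $\binom{n}{r}=1$ and Grover search is vacuous after the rank has been determined by the greedy phase.
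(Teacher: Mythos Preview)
Your proposal is correct and follows essentially the same approach as the paper: compute the rank $r$ with the greedy procedure, then Grover-search the $\binom{n}{r}$ many $r$-subsets for a dependent one, yielding $O(n)+O(\sqrt{\binom{n}{r}})=O(\sqrt{\binom{n}{\lfloor n/2\rfloor}})$ queries. Your write-up is in fact slightly more careful than the paper's, since you explicitly justify the equivalence ``uniform $\Leftrightarrow$ every $r$-set independent'' and you bound $\binom{n}{r}\le\binom{n}{\lfloor n/2\rfloor}$ uniformly in $r$ rather than only considering the worst case $r=\lfloor n/2\rfloor$.
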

\begin{proof}
  For determining whether a matroid is uniform, we just need to check the sets
  whose size are equal to its rank.
  Consider the Algorithm \ref{algorithm4uniformmatroid}: Uniform Matroid.
  First, we use Fact \ref{fact4computingrank} to
  compute the rank $r$ of $\mathcal{M}$. Then use GroverAlgorithm to
  determine that whether there is a $S\subseteq V$ with $|S|=r$ such that
  $\mathcal{O}_i(S)=0$ in the set $\{J\subseteq V:|J|=r\}$. If we find such a $S$, we
  can affirmatively conclude that $\mathcal{M}$ is not a uniform matroid. Otherwise
  we repeat GroverAlgorithm several times. Finally, if
  all the results of GroverAlgorithm such that
  $\mathcal{O}_i(S)=1$, it can be shown that $\mathcal{M}$ is an uniform matroid
  with a high probability. The total number of independence oracle required
  is $O(n)+\text{MAX\_REPEAT}\cdot\sqrt{\binom{n}{r}}$. 
  And the number MAX\_REPEAT is a constant. 
  When the rank $r=\lfloor n/2\rfloor$, the algorithm uses 
  $O(\sqrt{\binom{n}{\lfloor n/2\rfloor}})$ independence oracles.
\end{proof}

By Theorem \ref{theorem4uniformmatroidupperbound} and
Theorem \ref{theorem4nopolymoialalgorithm2}, we can
see that the presented algorithm  is asymptotically optimal.
This means that the quantum query complexity of the uniform decision
problem is $\Theta(\binom{n}{\lfloor n/2\rfloor})$.\\

\begin{minipage}{0.95\textwidth}
\begin{lstlisting}[morekeywords={if,int,for,return,while,to},
    emph={GreedyAlgorithm,GroverAlgorithm},emphstyle=\color{blue!50!green},
    caption=\textcolor{gray!20!red}{Uniform Matroid},
   label={algorithm4uniformmatroid},mathescape]
  /*------------------------------------------------------
  Input : Matroid $\mathcal{M}$ accessed by $O_i$ and ground set $V$.
  Output: 1 indicates that $\mathcal{M}$ is uniform, otherwise 0.
  -----------------------------------------------------*/
  int UniformMatroid(M,V)
  {
    int r=GreedyAlgorithm(M,V);
    int k=MAX_REPEAT;
    while(k>0)
    { 
      S=GroverAlgorithm($\{J\subseteq V:|J|=r\}$);
      if ($O_{i}(S)$==0) return 0;
      k=k-1;
    }
    return 1;
  }
\end{lstlisting}
\end{minipage}

\subsection{Quantum Algorithm for Finding Girth}
\begin{theorem}
  \label{theorem4findgirth}
  Let  $\mathcal{M}=(V,\mathcal{I})$  be a matroid with $|V|=n$. There is a quantum
  algorithm to find the girth of  $\mathcal{M}$ using
  $O(\log n\sqrt{\binom{n}{\lfloor n/2\rfloor}})$ independence oracles.
\end{theorem}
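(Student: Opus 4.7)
The plan is to combine binary search on the circuit size with Grover's algorithm (Theorem \ref{theorem4groveralgorithm}) as a subroutine.

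The key observation is that the girth equals the smallest $k$ such that there exists a $k$-set $S \subseteq V$ with $\mathcal{O}_i(S) = 0$. Define the Boolean predicate
\[
f(k) = 1 \iff \exists\, S\subseteq V \text{ with } |S|=k \text{ and } \mathcal{O}_i(S)=0.
\]
Since every superset of a dependent set is dependent, $f$ is monotone non-decreasing in $k$, and $g(\mathcal{M})$ is precisely the smallest $k$ with $f(k)=1$ (or $\infty$ if no such $k$ exists).

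First I would evaluate $f(k)$ by applying \emph{GroverAlgorithm} to the search space $\{S\subseteq V : |S|=k\}$ of size $\binom{n}{k}$, with the marked elements being those $S$ for which $\mathcal{O}_i(S)=0$. By Theorem \ref{theorem4groveralgorithm} this costs $O(\sqrt{\binom{n}{k}})$ independence queries and returns the correct value of $f(k)$ with constant success probability. Since $\binom{n}{k}\leq \binom{n}{\lfloor n/2\rfloor}$ for every $k$, each such evaluation costs at most $O(\sqrt{\binom{n}{\lfloor n/2 \rfloor}})$ queries. Next, I would binary search for the smallest $k\in\{1,\ldots,n\}$ with $f(k)=1$, using monotonicity of $f$; this requires $O(\log n)$ evaluations of $f$. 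If the binary search returns no such $k$, the algorithm outputs $g(\mathcal{M})=\infty$. Multiplying the per-evaluation cost by the number of evaluations yields the claimed $O(\log n\sqrt{\binom{n}{\lfloor n/2\rfloor}})$ bound on independence oracle queries.

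The only subtlety is error control: the binary search issues $O(\log n)$ calls to Grover, and constant failure probability per call is not sufficient. The standard fix is to boost each Grover invocation to success probability $1-O(1/\log n)$ by $O(\log\log n)$ independent repetitions with majority vote; a union bound over the $O(\log n)$ rounds then gives overall constant success probability. This extra factor is absorbed into the $\log n$ via the big-$O$ (alternatively one may replace binary search by the exact-count variant of Grover, which eliminates repeated rounds). I expect this error management, rather than any algorithmic novelty, to be the only delicate point; the correctness of binary search follows immediately from monotonicity of $f$, and the query cost follows from Theorem \ref{theorem4groveralgorithm}.
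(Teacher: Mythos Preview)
Your proposal is correct and follows essentially the same approach as the paper: evaluate, for each candidate size $k$, whether a dependent $k$-set exists via Grover search over $\{S\subseteq V:|S|=k\}$, and locate the minimum such $k$ by binary search, giving $O(\log n)$ Grover calls each costing at most $O(\sqrt{\binom{n}{\lfloor n/2\rfloor}})$ queries. The paper additionally first computes the rank $r$ (via the greedy algorithm) to restrict the search range to $[1,r+1]$ and handles the $g(\mathcal{M})=\infty$ case by a single query $\mathcal{O}_i(V)$, but these do not change the asymptotics; your treatment of error amplification is in fact more explicit than the paper's, which simply repeats each Grover call a constant number of times.
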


\begin{proof}
  Consider the Algorithm \ref{algorithm4computegirth}: Compute Girth, where the
  GreedyAlgorithm is used to compute the rank $r$ of $\mathcal{M}$ and the
  GroverAlgorithm is used to determine whether there is a
  $S\in \{J\subseteq V:|J|=k\}$ such that $\mathcal{O}_{i}(S)=0$ which implies that
  the girth of $\mathcal{M}$ is not greater that $k$. 
  Here we use binary search to find a circuit with girth size. 
  
  Suppose the girth of $\mathcal{M}$ is not $\infty$ and consider the worst case that
  the outer \emph{while} is executed $\log n$ times. 
  Each time the outer \emph{while} is executed, we repeatedly call GroverAlgorithm up to
  MAX\_REPEAT times. The number MAX\_REPEAT is a constant. 
  And the maximum number of independence oracle queried by GroverAlgorithm is
  $\sqrt{\binom{n}{\lfloor n/2\rfloor}}$. The total number of query to independence oracle
  is not more than $O(n)+\text{MAX\_REPEAT}\cdot\log n\sqrt{\binom{n}{\lfloor n/2\rfloor}}$.
  So \emph{ComputeGirth} uses $O(\log n\sqrt{\binom{n}{\lfloor n/2\rfloor}})$ independence queries to compute the girth with probability of at least a constant.
\end{proof}
By Theorem \ref{theorem4findgirth} and Theorem \ref{theorem4nopolymoialalgorithm}, 
we can see that the quantum algorithm is almost optimal.

\begin{minipage}{0.95\textwidth}
\begin{lstlisting}[morekeywords={if,int,for,return,while},
    emph={GreedyAlgorithm,GroverAlgorithm},emphstyle=\color{blue!50!green},
    caption=\textcolor{gray!20!red}{Compute Girth},
    label={algorithm4computegirth},
    mathescape]
/*------------------------------------------------------
  Input : Matroid $\mathcal{M}$ accessed by $O_i$ and ground set $V$.
  Output: The girth of $\mathcal{M}$.
  ------------------------------------------------------*/
  int ComputeGirth(M,V)
  {
    int girth=$\infty$;
    if ($O_{i}(V)$==1) return girth;
    int r=GreedyAlgorithm(M,V);
    int lindex=1, rindex=r+1, k=0;
    while(lindex != rindex)
    {
      girth = (lindex+rindex)/2;
      k=MAX_REPEAT;
      while(k)
      {
        S=GroverAlgorithm($\{J\subseteq V:|J|=girth\}$);
        if ($O_{i}(S)$==0)                          
        {
          rindex=girth;
          break;
        }
        k=k-1;
      }
      if (k==0)
      {
        lindex=girth;
      }
    }
    return girth;
  }
\end{lstlisting}
\end{minipage}

\subsection{Quantum Complexity and Algorithm for Paving Matroid Decision Problem}
Paving matroids are a very important type of matroids in matroid theory.
In the early 1970's, Blackburn, Crapo, and Higg \cite{AMS/Blackburn73} noticed
that most of the matroids on a ground set of up to 8 elements are paving matroids.
Crapo and Rota \cite{MIT/Crapo70} suggested that perhaps paving matroids 
``would actually predominate in any asymptotic enumeration of geometries''.
Mayhew et al. \cite{DBLP:journals/ejc/MayhewNWW11}
gave a conjecture, ``Asymptotically, almost every matroid is paving''.
Here we give an almost optimal quantum algorithm to determine whether a matroid
is paving.

\begin{fact}
  \label{fact4pavingmatroid}
  Given a matroid $\mathcal{M}=(V,\mathcal{I})$ with rank $r\geq 2$, if there exists a
  circuit $C$ in $\mathcal{M}$ with $|C|\leq r-2$. Then there must be a dependent set
  $J$ with $|J|=r-1$ in $\mathcal{M}$.
\end{fact}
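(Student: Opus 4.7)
The plan is to produce a dependent set of size exactly $r-1$ by extending the small circuit $C$ upward, using the hereditary axiom \textbf{I1} in its contrapositive form: any superset of a dependent set is dependent.

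First I would observe that $|V|\ge r$, since the matroid has rank $r$, and by hypothesis $|C|\le r-2 < r-1 \le |V|$. Hence there are enough elements in $V\setminus C$ to extend $C$ to a larger set of the desired size. Specifically, I would pick any subset $D\subseteq V\setminus C$ of size $(r-1)-|C|\ge 1$ and define $J := C\cup D$. Then $|J|=r-1$ and $C\subseteq J$.

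Next I would invoke the contrapositive of \textbf{I1}: if $C\subseteq J$ and $C\notin \mathcal{I}$, then $J\notin \mathcal{I}$. Since $C$ is a circuit it is in particular dependent, so $J$ is dependent. Thus $J$ is the required dependent set of size $r-1$, completing the proof.

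There is essentially no obstacle here: the statement is a direct consequence of the hereditary axiom together with the trivial counting fact that a set of size at most $r-2$ sits inside a set of size $r-1$ in a ground set of size at least $r$. The only thing worth being explicit about is that the hypothesis $r\ge 2$ guarantees $r-1\ge 1$, so the target size makes sense, and $|C|\le r-2$ guarantees that at least one element can (and must) be added.
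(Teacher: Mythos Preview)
Your proof is correct and follows exactly the same approach as the paper: extend $C$ by adjoining $r-1-|C|$ elements of $V\setminus C$ to obtain a superset $J$ of size $r-1$, which is dependent because it contains the circuit $C$. The paper's own argument is a one-line version of what you wrote.
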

\begin{proof}  
  This fact is obvious. Let $X\subseteq V\setminus C$ with $|X|=r-1-|C|$, then
  $C\subseteq J=C\cup X$ is a dependent set of $\mathcal{M}$ with $|J|=r-1$.
\end{proof}

\begin{theorem}
  \label{theorem4pavingmatroidupperbound}
  Let  $\mathcal{M}=(V,\mathcal{I})$  be a matroid with $|V|=n$. There is a quantum
  algorithm to decide whether it is paving using $O(\sqrt{\binom{n}{\lfloor n/2\rfloor}})$
  independence oracles.
\end{theorem}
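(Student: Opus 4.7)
The plan is to reduce the paving decision problem to a single Grover search over the $(r-1)$-subsets of $V$, where $r=\text{rank}(\mathcal{M})$. The key observation is that $\mathcal{M}$ is paving if and only if no $(r-1)$-subset of $V$ is dependent. Indeed, if every $(r-1)$-subset is independent, then every circuit has size at least $r$ and $\mathcal{M}$ is paving by definition. Conversely, if $\mathcal{M}$ is not paving, there is a circuit $C$ with $|C|\leq r-1$: in the case $|C|=r-1$ the circuit $C$ itself is a dependent $(r-1)$-subset, and in the case $|C|\leq r-2$ Fact \ref{fact4pavingmatroid} directly furnishes a dependent $(r-1)$-superset of $C$. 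Thus detecting non-paving is exactly the task of finding a dependent element in the set $\{J\subseteq V:|J|=r-1\}$, for which the independence oracle is a ready-made evaluator.

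From here I would mirror the structure of Algorithm \ref{algorithm4uniformmatroid}. First, call GreedyAlgorithm (Fact \ref{fact4computingrank}) to obtain $r$ in $O(n)$ independence queries; handle the trivial cases $r\in\{0,1\}$ by returning $1$ immediately, since every circuit already has size $\geq 1$ in that regime. For $r\geq 2$, invoke GroverAlgorithm (Theorem \ref{theorem4groveralgorithm}) on the search space $\{J\subseteq V:|J|=r-1\}$ with the predicate $\mathcal{O}_i(J)=0$, verify any returned candidate with one extra oracle call, and repeat a constant \texttt{MAX\_REPEAT} number of times to amplify the success probability to a desired constant. If any iteration returns a dependent $(r-1)$-set, output $0$; otherwise output $1$.

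For the complexity, the greedy phase costs $O(n)$ queries, while each Grover invocation costs $O(\sqrt{\binom{n}{r-1}})$, with only a constant number of repetitions. Since $\binom{n}{k}$ is maximized at $k=\lfloor n/2\rfloor$, we have $\binom{n}{r-1}\leq \binom{n}{\lfloor n/2\rfloor}$ uniformly in $r$, so the total query count is $O(n)+O(\sqrt{\binom{n}{\lfloor n/2\rfloor}})=O(\sqrt{\binom{n}{\lfloor n/2\rfloor}})$, the $O(n)$ term being absorbed by the binomial square root. There is no real obstacle beyond setting up the reduction: once Fact \ref{fact4pavingmatroid} is in hand, the argument is a direct application of Grover search, structurally identical to the uniform matroid algorithm in Theorem \ref{theorem4uniformmatroidupperbound}. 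The only subtle point to check is that the reduction is lossless, i.e.\ that restricting attention to $(r-1)$-subsets rather than examining circuits of all sizes below $r$ does not miss any non-paving instance — and this is precisely the content of Fact \ref{fact4pavingmatroid}.
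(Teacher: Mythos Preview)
Your proposal is correct and follows essentially the same approach as the paper: compute the rank with the greedy routine, reduce the paving test via Fact~\ref{fact4pavingmatroid} to a Grover search over the $(r-1)$-subsets for a dependent set, and repeat a constant number of times. Your write-up is in fact a touch more careful than the paper's, since you explicitly bound $\binom{n}{r-1}\leq \binom{n}{\lfloor n/2\rfloor}$ for all $r$ (the paper only treats the worst case $r-1=\lfloor n/2\rfloor$) and you handle the edge cases $r\in\{0,1\}$.
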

\begin{proof}
  Consider the Algorithm \ref{algorithm4pavingmatroid}: {\sc Paving Matroid}.
  By Fact \ref{fact4pavingmatroid}, we can see that a matroid $\mathcal{M}$ with rank
  $r\geq 2$ is a paving matroid if and only if every the $(r-1)$-set
  is an independent set. First we use Fact \ref{fact4computingrank} to compute the rank
  $r$ of $\mathcal{M}$. Then we use GroverAlgorithm to determine whether there is a
  a $S\subseteq V$ with $|S|=r-1$ such that $\mathcal{O}_{i}(S)=0$ in the set
  $\{J\subseteq V:|J|=r-1\}$. If we find such a $S$, we
  can affirmatively conclude that $\mathcal{M}$ is not a paving matroid. Otherwise
  we repeat GroverAlgorithm several times. Finally, if all the results of
  GroverAlgorithm such that $\mathcal{O}_i(S)=1$, it can be shown that $\mathcal{M}$
  is a paving matroid with a high probability.
  The total number of independence query oracle required
  is $O(n)+\text{MAX\_REPEAT}\cdot\sqrt{\binom{n}{r-1}}$. 
  And the number MAX\_REPEAT is a constant.
  When $r-1=\lfloor n/2\rfloor$,
  the total number of independence query oracles used is
  $O(\sqrt{\binom{n}{\lfloor n/2\rfloor}})$.
\end{proof}

\begin{minipage}{0.95\textwidth}
\begin{lstlisting}[morekeywords={if,int,for,return,while},
    emph={GreedyAlgorithm,GroverAlgorithm},emphstyle=\color{blue!50!green},
    caption=\textcolor{gray!20!red}{Paving Matroid},
    label={algorithm4pavingmatroid},
    mathescape]
 /*------------------------------------------------------
  Input : Matroid $\mathcal{M}$ accessed by $O_i$ and ground set $V$.
  Output: 1 indicates that $\mathcal{M}$ is paving, otherwise 0.
  -----------------------------------------------------*/
  int PavingMatroid(M,V)
  {
    int r=GreedyAlgorithm(M,V);
    int k=MAX_REPEAT;
    while(k>0)
    {
      S=GroverAlgorithm($\{J\subseteq V:|J|=r-1\}$);
      if ($O_{i}(S)$==0) return 0;  
      k=k-1;
    }
    return 1;
  }
\end{lstlisting}
\end{minipage}

\begin{fact}
  \label{fact4differenceinstring}
  Given a ground set $V$ with $|V|=n$, an integer $r$ with $1\leq r\leq n$ and
  $A,C\subseteq V$ with $|A|=|C|=r-1$. Define $J_{r},J_{A},J_{C}$ as Notations in Section 
  \ref{matroidnotation}. Let
  $\mathcal{B}_{A}=J_{r}-J_{A}$,
  $\mathcal{B}_{C}=J_{r}-J_{C}$.
  We use a $\binom{n}{r}$ bit $0-1$ string $x(\mathcal{M})$ encodes a matroid
  $\mathcal{M}$ with $\text{rank}(\mathcal{M})=r$. For $i\in[\binom{n}{r}]$,
  $x(\mathcal{M})[i]=1$ indicates that some $r$-set be a base of $\mathcal{M}$.
  For any different $A$ and $C$, the two matroids determined by $\mathcal{B}_{A}$
  and $\mathcal{B}_{C}$ as the collection of bases, denoted by $\mathcal{M}_{A}$ and
  $\mathcal{M}_{C}$, then there is at most one $i\in[\binom{n}{r}]$ such that
  $x(\mathcal{M}_{A})[i]=x(\mathcal{M}_{C})[i]=0$.
\end{fact}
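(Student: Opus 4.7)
The plan is to translate the claim about coincident zero positions into a question about the intersection $J_{A}\cap J_{C}$ and then bound that intersection by a short combinatorial argument. Concretely, since $\mathcal{B}_{A}=J_{r}-J_{A}$, a bit $i$ satisfies $x(\mathcal{M}_{A})[i]=0$ exactly when the $r$-set $T_i$ encoded by position $i$ lies in $J_{A}$, and similarly for $C$. Hence the number of positions $i\in[\binom{n}{r}]$ with $x(\mathcal{M}_{A})[i]=x(\mathcal{M}_{C})[i]=0$ is exactly $|J_{A}\cap J_{C}|$, and it suffices to show $|J_{A}\cap J_{C}|\leq 1$ whenever $A\neq C$.

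Next I would characterise $J_{A}\cap J_{C}$. By definition, $T\in J_{A}$ iff $|T|=r$ and $A\subseteq T$, and analogously for $C$. So $T\in J_{A}\cap J_{C}$ iff $T$ is an $r$-set containing $A\cup C$. The number of such $T$ is $\binom{n-|A\cup C|}{\,r-|A\cup C|\,}$ if $|A\cup C|\leq r$, and $0$ otherwise.

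Now I would use $|A|=|C|=r-1$ together with $A\neq C$ to control $|A\cup C|$. By inclusion–exclusion, $|A\cup C|=2(r-1)-|A\cap C|$, and since $A\neq C$ we have $|A\cap C|\leq r-2$, hence $|A\cup C|\geq r$. Two cases remain: if $|A\cup C|=r$ (i.e.\ $|A\cap C|=r-2$, so $A$ and $C$ differ by exactly one element each), then there is a single $r$-set containing $A\cup C$, namely $A\cup C$ itself, giving $|J_{A}\cap J_{C}|=1$; if $|A\cup C|\geq r+1$, no $r$-set can contain $A\cup C$, so $|J_{A}\cap J_{C}|=0$. Either way, the common zeros number at most one.

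There is no real obstacle here; the only subtlety is reading the encoding correctly so that ``both bits equal $0$'' translates to ``the corresponding $r$-set belongs to both $J_{A}$ and $J_{C}$''. Once that is pinned down, the claim reduces to the elementary count $|A\cup C|\geq r$ that I outlined above.
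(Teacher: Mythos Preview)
Your proposal is correct and follows essentially the same approach as the paper: both reduce the question to counting $r$-sets containing $A\cup C$, split into the cases $|A\cap C|=r-2$ and $|A\cap C|<r-2$, and observe that the unique common $r$-superset in the first case is $A\cup C$ itself while no common $r$-superset exists in the second. Your write-up is in fact a bit more explicit in translating the encoding and invoking inclusion--exclusion, but the substance is identical.
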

\begin{proof}
  For any two different $(r-1)$-set $A$ and $C$, we have $0\leq |A\cap C| \leq r-2$.
  $|A\cap C|< r-2$ implies that for any $e\in V-A$ and $e'\in V-C$, we have
  $A_e\neq C_{e'}$. Thus for any $i\in[\binom{n}{r}]$, $x(\mathcal{M}_{A})[i]$
  and $x(\mathcal{M}_{C})[i]$ will not be 0 at the same time.
  $|A\cap C|=r-2$ implies that there exists one and only one $r$-set
  (that is $A\cup C$) be the superset of $A$ and $C$ with cardinality $r$.  
  The fact is proved.
\end{proof}
\begin{theorem}
  \label{theorem4pavingmatroidlowerbound}
  Let  $\mathcal{M}=(V,\mathcal{I})$  be a matroid with $|V|=n$.  Then any quantum
  query algorithm  to decide whether $\mathcal{M}$ is paving requires at least
  $\Omega(\sqrt{\binom{n}{\lfloor n/2\rfloor}/n})$ independence oracles.
\end{theorem}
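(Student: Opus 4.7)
The bound will follow from Ambainis's quantum adversary method (Theorem \ref{theorem4quantumadersarymethod}), applied in the spirit of the proofs of Theorems \ref{theorem4nopolymoialalgorithm} and \ref{theorem4nopolymoialalgorithm2}. Fix $r=\lfloor n/2\rfloor+1$, so that $r-1=\lfloor n/2\rfloor$, and encode every rank-$r$ matroid on $V$ as a $\binom{n}{r}$-bit string whose $i$-th bit is $1$ iff the $r$-set corresponding to $i$ is a base. Take the single ``paving'' input $x(U_{r,n})$, and for each $(r-1)$-subset $A\subseteq V$ take the ``non-paving'' input $x(\mathcal{M}_A)$, where $\mathcal{M}_A$ is the rank-$r$ matroid of Fact \ref{fact4differenceinstring} whose bases are exactly the $r$-sets not containing $A$. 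Since $A$ itself is a dependent $(r-1)$-set in $\mathcal{M}_A$, it contains a circuit of size at most $r-1<r$, so $\mathcal{M}_A$ is non-paving.

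Set $X=\{x(U_{r,n})\}$, $Y=\{x(\mathcal{M}_A):A\subseteq V,|A|=r-1\}$, and $R=X\times Y$. Let $p$ be the Boolean function that is $1$ iff the input encodes a paving matroid, so $p\equiv 1$ on $X$ and $p\equiv 0$ on $Y$. For each $x\in X$ there are $m=|Y|=\binom{n}{r-1}$ related $y$'s, and for each $y\in Y$ there is $m'=1$ related $x$. For any bit position $i$ corresponding to an $r$-set $B$, the unique $x\in X$ has $x_i=1$, so $x_i\neq y_i$ with $y=x(\mathcal{M}_A)$ iff $B$ is not a base of $\mathcal{M}_A$, i.e., iff $A\subseteq B$. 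An $r$-set $B$ has $\binom{r}{r-1}=r$ distinct $(r-1)$-subsets, so $l\leq r$; trivially $l'\leq 1$ because $|X|=1$. Fact \ref{fact4differenceinstring} packages this count from the other direction: two distinct $\mathcal{M}_A,\mathcal{M}_C$ can share a common non-base $r$-set only when $A,C$ are the two $(r-1)$-subsets of the single $r$-set $A\cup C$.

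Plugging into Theorem \ref{theorem4quantumadersarymethod} then gives a lower bound of $\Omega(\sqrt{mm'/(ll')})=\Omega(\sqrt{\binom{n}{r-1}/r})=\Omega(\sqrt{\binom{n}{\lfloor n/2\rfloor}/n})$ queries to the independence oracle, as required. The main technical point is confirming that $\mathcal{M}_A$ is genuinely a matroid; the only subtle case of the independence axiom \textbf{I2} arises when $A\subseteq S_1\cup S_2$ for two independent sets with $|S_1|<|S_2|$, and one shows this forces $|S_2-S_1|\geq 2$, so an exchange element $v\in S_2-S_1$ avoiding the unique element of $A-S_1$ can still be chosen. The extra factor $r=\Theta(n)$ appearing in $l$, caused by each $r$-set containing $r$ different $(r-1)$-subsets, is exactly what opens the $\sqrt{n}$ gap between this lower bound and the upper bound of Theorem \ref{theorem4pavingmatroidupperbound}.
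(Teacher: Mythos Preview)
Your proposal is correct and follows essentially the same route as the paper: the same pair $X=\{x(U_{r,n})\}$, $Y=\{x(\mathcal{M}_A)\}$ with $\mathcal{B}_A=J_r-J_A$, the same relation $R=X\times Y$, and the same adversary parameters $m=\binom{n}{r-1}$, $m'=1$, $l=\binom{r}{r-1}$, $l'=1$, specialized at $r-1=\lfloor n/2\rfloor$. Your write-up even adds a sketch of why $\mathcal{M}_A$ satisfies \textbf{I2}, which the paper simply asserts.
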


\begin{proof}
  Given an integer $r$ with $1\leq r \leq n$ and a subset $A\subseteq V$ with $|A|=r-1$,
  $e\in V-A$, define $J_{r}, J_{A}$ as Notations in Section  \ref{matroidnotation}.
  Let $\mathcal{B}_{A}=J_{r}-J_{A}$.
  Then there is a matroid on $V$ that can be determined by $\mathcal{B}_{A}$ as the
  collection of bases, denoted by $\mathcal{M}_{A}$.
  Furthermore, $\mathcal{M}_A$ is not a
  paving matroid (because
  $A\notin \mathcal{I}(\mathcal{M}_{A})$ and $\text{rank}(\mathcal{M}_{A})=r > |A|$).

  We encode every matroid with rank $r$ on $V$ to a different $\binom{n}{r}$-bit
  $0-1$ string(see Matroid Representation \ref{matroidrepresentation}). Let  
  $X=\{x(U_{r,n})\in\{0,1\}^{\binom{n}{r}}\}$,
  $Y=\{x(\mathcal{M}_{A})\in\{0,1\}^{\binom{n}{r}}:\forall A\subseteq V\;\text{with}\;|A|=r-1)\}$.
  Define a Boolean function $f:\{0,1\}^{\binom{n}{r}}\rightarrow \{0,1\}$, $f(x)=1$ 
  if and only if the matroid $\mathcal{M}(x)$ is a paving matroid.
  By the definition of $X$ and $Y$, it is easy to verify that every
  $x\in X$ is a paving matroid and every $y\in Y$ is not a paving matroid. So
  $X$,$Y$ are two sets of inputs such that
  $f(x)\neq f(y)$ if $x\in X$ and $y\in Y$. Let $R\subset X_{r}\times Y_{r}$
  be a relation such that
  \begin{enumerate}
  \item For every $x\in X$, there exist $\binom{n}{r-1}$ different $y\in Y$ such
    that  $(x,y)\in R$.
  \item For every $y\in Y$, there exists one $x\in X$ such that $(x,y)\in R$.
  \item For every $x\in X$ and $i\in[\binom{n}{r}]$, there are at most 
    $\binom{r}{r-1}$ (known from Fact \ref{fact4differenceinstring}) different
    $y\in Y$ such that $(x,y)\in R$ and $x_i\neq y_i$.
  \item For every $y\in Y$ and $i\in[\binom{n}{r}]$, there is at most one
    $x\in X$ such that $(x,y)\in R$ and $x_i\neq y_i$.
  \end{enumerate}  
  
  It can be seen 
  that any quantum algorithm computing $f$ uses
  $\Omega(\sqrt{\binom{n}{r-1}/r})$ queries. When $r-1=\lfloor n/2\rfloor$,
  we obtain the lower bound 
  $\Omega(\sqrt{\binom{n}{\lfloor n/2\rfloor}/n})$.
\end{proof}

By Theorem \ref{theorem4pavingmatroidupperbound} and
Theorem \ref{theorem4pavingmatroidlowerbound}, one can see that our quantum algorithm is
almost optimal.

\subsection{Trivial and Loopless Decision Problems}
\begin{theorem}
  \label{theorem4trivialandloopless}
  The quantum query complexity of trivial and loopless decision problems are
  $\Theta(\sqrt{n})$.
\end{theorem}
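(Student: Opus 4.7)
The plan is to exploit the fact that both properties are determined entirely by the $n$ singleton queries: $\mathcal{M}$ is trivial iff $\mathcal{O}_i(\{j\}) = 0$ for every $j \in V$, and loopless iff $\mathcal{O}_i(\{j\}) = 1$ for every $j \in V$. So each problem reduces to an AND/OR over $n$ oracle bits, and I aim for a matching $\Theta(\sqrt{n})$ bound.

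For the upper bound, I would invoke \emph{GroverAlgorithm} (Theorem \ref{theorem4groveralgorithm}) on the search space $\{\{j\} : j \in V\}$ of size $n$. For triviality, search for an index $j$ with $\mathcal{O}_i(\{j\}) = 1$ and declare the matroid trivial iff no such $j$ is found; the loopless case is identical with the characteristic bit flipped. A constant number of amplifications yields bounded error using $O(\sqrt{n})$ queries.

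For the lower bound, I would apply Ambainis's adversary method (Theorem \ref{theorem4quantumadersarymethod}) with witness matroids that differ from a fixed reference in exactly one bit of the $2^n$-bit encoding. For the trivial problem, take $X = \{x(\mathcal{M}_0)\}$, where $\mathcal{M}_0$ is the trivial matroid with $\mathcal{I}(\mathcal{M}_0) = \{\emptyset\}$, and $Y = \{x(\mathcal{M}_j) : j \in V\}$, where $\mathcal{M}_j$ is the rank-$1$ matroid with $\mathcal{I}(\mathcal{M}_j) = \{\emptyset, \{j\}\}$. A quick check of I0--I2 shows each $\mathcal{M}_j$ is a valid non-trivial matroid, and its encoding differs from that of $\mathcal{M}_0$ only at the bit indexed by $\{j\}$. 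Taking $R = X \times Y$ gives the parameters $m = n$ and $m' = l = l' = 1$, which yield $\Omega(\sqrt{mm'/(ll')}) = \Omega(\sqrt{n})$.

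The delicate point will be the loopless lower bound: the naive attempt of starting from the free matroid and turning one element $j$ into a loop would flip the independence bit of every subset containing $j$, giving $2^{n-1}$ differences per witness and a useless $\Omega(1)$ adversary bound. My fix is to use a reference matroid that already declares all large subsets dependent. Concretely, take $X = \{x(U_{1,n})\}$ (which is loopless) and $Y = \{x(\mathcal{N}_j) : j \in V\}$, where $\mathcal{N}_j$ is the rank-$1$ matroid with $\mathcal{I}(\mathcal{N}_j) = \{\emptyset\} \cup \{\{k\} : k \neq j\}$---that is, $U_{1,n}$ with $j$ converted into a loop. Since every set of size at least $2$ is already dependent in $U_{1,n}$, the encodings of $U_{1,n}$ and $\mathcal{N}_j$ agree except at bit $\{j\}$, while each $\mathcal{N}_j$ is a valid rank-$1$ matroid that is not loopless. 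The same adversary parameters $m = n$, $m' = l = l' = 1$ again yield $\Omega(\sqrt{n})$.
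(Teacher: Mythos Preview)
Your proof is correct and rests on the same observation as the paper: both properties are determined entirely by the $n$ singleton queries, so each problem is an OR/AND over $n$ oracle bits. For the upper bound you and the paper proceed identically via Grover search on $\{\{j\}:j\in V\}$.

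For the lower bound the paper is terser: it simply declares the problems to be ``essentially the unordered search problem'' and cites the known $\Theta(\sqrt{n})$ bound (Grover, Boyer et al., Zalka). You instead carry out an explicit Ambainis adversary argument with handcrafted matroid witnesses---$\mathcal{M}_0$ versus $\mathcal{M}_j$ for triviality, and $U_{1,n}$ versus $\mathcal{N}_j$ for looplessness. Your route is slightly longer but buys something the paper glosses over: it verifies that the hard instances of the underlying search problem are realized by \emph{valid} matroids, so that the search lower bound genuinely transfers. Your choice of $U_{1,n}$ (rather than the free matroid) as the loopless reference is exactly the right fix to keep each witness pair differing in a single oracle bit, and the adversary parameters $m=n$, $m'=l=l'=1$ are correct in both cases.
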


\begin{proof}
  From Definitions \ref{definition4freeandtrivialmatroid} and
  \ref{definition4looplessmatroid},
  we can see that a matroid is a trivial matroid if and only if all the
  singleton in $V$ are dependent, and a matroid is loopless if and only if all the
  singleton in $V$ are independent. In other words, for the trivial(loopless) matroid
  decision problem, if there is an $e\in V$ such that $\mathcal{O}_{i}(\{e\})=1(0)$ implies
  the matroid is not a trivial(loopless) matroid. So these two decision problems are
  essentially the unordered search problem in the set $V$. 
  By \cite{DBLP:conf/stoc/Grover96, WOL/Boyer98, PRA/Zalka99},
  the quantum query complexity of the trivial and loopless decision problems are
  $\Theta(\sqrt{n})$.
\end{proof}

\section{Conclusion}
In this paper, we discussed quantum speedup and limitations on matroid property problems, assuming that a matriod
can be accessed through the independence oracle. We obtained lower bounds on the
quantum query complexity for the calculation problem of finding the girth or the number of circuits (bases, flats, hyperplanes)
of a matroid,  and for the decision problem of deciding whether a matroid is
 uniform  or Eulerian. These lower bounds imply that there is no
polynomial-time quantum algorithm for these problems. We also presented  quantum
algorithms with potential quadratic  speedup  over classical ones for some of these problems
and the algorithms are asymptotically optimal or almost optimal. 

There are a large number of matroids in a ground set with $n$ elements.
This gives us many potential possibilities. There are some interesting questions
worthy of further consideration.

\begin{enumerate}
\item Are there  matroid problems for which quantum algorithms can show
  superpolynomial speedup over their counterparts?
  If yes, can we characterize the properties owned by these problems?
\item Robinson and Welsh\cite{CUP/robinson80} have shown that the capabilities of oracles given by 
  different definitions are very different. What about in the quantum case?
\item In this paper, we simply use Grover's algorithm to solve some relatively
  simple problems.
  What about other problems?
  Especially the matroid intersection problem and the matroid partition problem.
\end{enumerate}
\section*{Acknowledgement}
This work was supported by the National Natural Science Foundation of China (Grant No. 61772565), the Guangdong Basic and Applied Basic Research Foundation (Grant No. 2020B1515020050), the Key Research and Development project of Guangdong Province (Grant No. 2018B030325001)

\bibliography{main-full}

\end{document}